\newcommand{\comment}[1]{}
\begin{document}
\title{Detecting Superbubbles in Assembly Graphs}
\author{
  Taku Onodera\inst{1}
\and 
  Kunihiko Sadakane\inst{2}
\and
  Tetsuo Shibuya\inst{1}
}
\authorrunning{T. Onodera et al.}

\institute{
  Human Genome Center, Institute of Medical Science, University of Tokyo
4-6-1 Shirokanedai, Minato-ku, Tokyo 108-8639, Japan.
  \email{\{tk-ono,tshibuya\}@hgc.jp}
\and
  National Institute of Informatics, 2-1-2 Hitotsubashi, Chiyoda-ku, \\
  Tokyo 101-8430, Japan.
  \email{sada@nii.ac.jp}
}

%\date{}
\maketitle

\begin{abstract}
We introduce a new concept of a subgraph class called a superbubble for 
analyzing assembly graphs, 
and propose an efficient algorithm for detecting it.
Most assembly algorithms utilize assembly graphs like 
the de Bruijn graph or the overlap graph constructed from reads. 
From these graphs, 
many assembly algorithms first detect simple local graph structures (motifs), 
such as tips and bubbles, mainly to find sequencing errors. 
These motifs are easy to detect, 
but they are sometimes too simple to deal with more complex errors. 
The superbubble is an extension of the bubble, which is also 
important for analyzing assembly graphs. 
Though superbubbles are much more complex than ordinary bubbles, 
we show that they can be efficiently enumerated. 
We propose an average-case linear time algorithm 
({\it i.e.}, $O(n+m)$ for a graph with $n$ vertices and $m$ edges) 
for graphs with a reasonable model, 
though the worst-case time complexity of our algorithm 
is quadratic ({\it i.e.}, $O(n(n+m))$). 
Moreover, the algorithm is practically very fast: 
Our experiments show that 
our algorithm runs in reasonable time with a single CPU core 
even against a very large graph of a whole human genome. 
\end{abstract}

%%%%%%%%%%%%%%%%%%%%%%%%%%%%%%%%%%%%%%%%%%%%%%%%%%%%%%%%%%%%%%%%%%%%%%%%%%

\section{Introduction}
\label{sec:introduction}
The sequencing technologies have evolved dramatically in the past 25 
years, and nowadays many 
next-generation sequencers (NGSs) can sequence a human genome-size genome 
in only a few hours with very small costs. 
But still there is no sequencing technology that can sequence the 
entire genome at a time without breaking the genome into 
millions or billions of short reads. Thus assembling these reads into 
a whole genome has been one of the most important computational 
problems in molecular biology, and quite a few algorithms have been 
proposed for the problem~\cite{KasMor06,MilKor10,Pop09} 
despite the computational difficulty of the problem~\cite{Mye95}. 

Most assembly algorithms construct some graph 
in their first stage. They are 
categorized into two types depending on the types of the graph. 
Many old-time assemblers utilize a graph called the {\it overlap graph}, 
in which a vertex corresponds to a read and an edge corresponds to 
a pair of reads that have an enough-length overlap~\cite{BatJaf02,HuaYan05,MyeSut00}. 
More recent algorithms often utilize a graph called 
the {\it de Bruijn graph}, in which an edge corresponds to 
a $k$-mer that exists in reads and a vertex corresponds to 
the shared $(k-1)$-mer between the adjacent 
$k$-mers~\cite{JacReg10,LiZhu10,MacPrz09,PevTan01,SahShi12,SimWon09,ZerBir08}. 
The de Bruijn graph is said to be more suitable for NGS short reads of 
large depth. 

The next step of most sequencing algorithms after constructed the graph 
is to simplify the obtained graph by decomposing a maximal unbranched 
sequence of edges (which is called a {\it unipath}) 
into one single edge~\cite{JacReg10,MacPrz09,SahShi12} (Fig.~\ref{fig:unipath}). 
The obtained graph is called a {\it unipath 
graph}. 
After obtained the unipath graph, 
many sequencing algorithms next
detect simple typical motif structures caused by errors to detect errors: 
The most common motifs are tips, bubbles, and 
cross links~\cite{JacReg10,LiZhu10,SahShi12,ZerBir08} (Fig.~\ref{fig:motifs}). 

\begin{figure}[bt]
\begin{center}
  \includegraphics[scale=0.6]{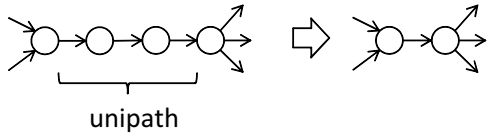}
\caption{Construction of a unipath graph.}
\label{fig:unipath}
\end{center}
\end{figure}

\begin{figure}[bt]
\begin{center}
\begin{minipage}{0.275\textwidth}
\begin{center}
  \includegraphics[scale=0.5]{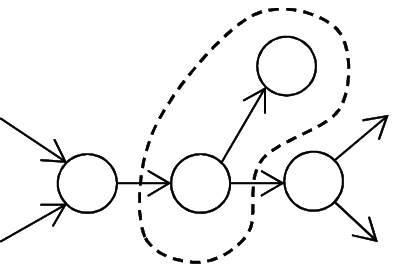}
\end{center}
\end{minipage}
\begin{minipage}{0.275\textwidth}
\begin{center}
  \includegraphics[scale=0.5]{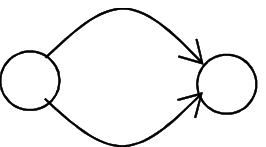}
\end{center}
\end{minipage}
\begin{minipage}{0.275\textwidth}
\begin{center}
  \includegraphics[scale=0.5]{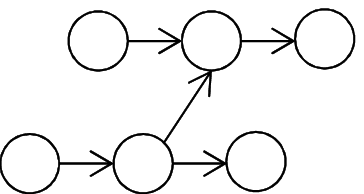}
\end{center}
\end{minipage}
\begin{minipage}{0.275\textwidth}
\begin{center}
\ \\ 
(1) A tip
\end{center}
\end{minipage}
\begin{minipage}{0.275\textwidth}
\begin{center}
\ \\ 
(2) A bubble
\end{center}
\end{minipage}
\begin{minipage}{0.275\textwidth}
\begin{center}
\ \\ 
(3) A cross link
\end{center}
\end{minipage}\vspace{-1ex}
\caption{Assembly graph simple motifs.}
\label{fig:motifs}
\end{center}
\end{figure}

\setcounter{footnote}{0}
A tip (Fig.~\ref{fig:motifs} (1)) 
is a low-frequency edge whose end (or start) vertex has no outgoing 
(resp. incoming) edges, 
which goes out from (resp. comes into) a high-frequency 
vertex\footnote{We say 'low/high'-frequency vertices/edges for vertices/edges 
that correspond to few/many reads.}.  
This motif often appears in case there are some error(s) around the end of 
a read. A bubble (Fig.~\ref{fig:motifs} (2)) consists of 
multiple edges (with the same direction) between a pair of vertices, 
which is often caused by error(s) somewhere in the middle of a read. 
A cross link (Fig.~\ref{fig:motifs} (3)) 
is a low-frequency edge that lies between high-frequency vertices. 
This appears when a substring of a read accidentally becomes (by error) 
the same substring that appears in a different region. 
All of these motifs are easy to find (obviously in linear time) 
due to their simplicity. 

\begin{figure}[bt]
\begin{center}
  \includegraphics[scale=0.45]{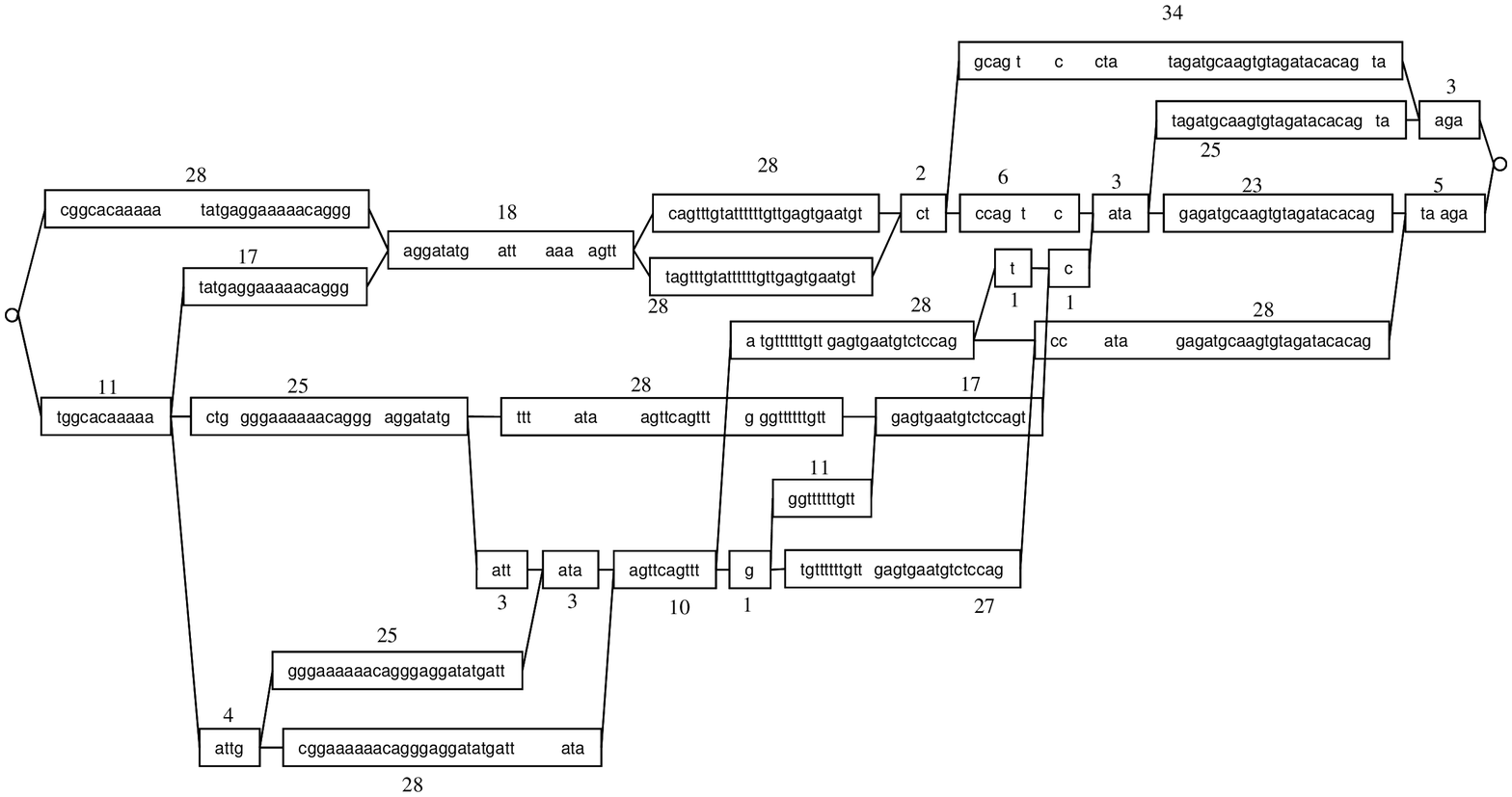}
\caption{A superbubble: A very complicated structure caused by errors or repeats. All the edges are labeled with sequences (vertices are not shown). 
The gaps in the labels are inserted manually in the figure 
to show alignment between edge 
labels that start at different offsets from the entrance of the superbubble.}
\label{fig:superbubble}
\end{center}
\end{figure}

But we should consider much more complex structures 
if input reads are erroneous (as in the case of the third generation 
sequencers), 
have many repeats (as in many large-scale genomes/meta-genomes), or have 
many mutations (as in cancer genomes). 
Fig.~\ref{fig:superbubble}
shows an example of a subgraph of a unipath graph obtained from actual 
whole human genome reads (the same set of reads 
used in the experiments in section~\ref{sec:experiments}). 
In this subgraph, paths from the leftmost vertex branch 
to many paths but they converge into the rightmost single vertex in the end, 
and there are no cycles in this subgraph, 
{\it i.e.}, the subgraph forms a directed acyclic graph (DAG). 
The vertices between the leftmost vertex and the rightmost vertex has 
no outgoing/incoming edges to/from external vertices ({\it i.e.}, 
vertices not in this subgraph). 
An important point is that all the paths have similar labels with 
similar lengths.\footnote{The experiments in section \ref{sec:experiments} will show that the path label
lengths of a superbubble are only at most $5$\% different in more than $85$\% of the detected superbubbles.} 
We call this kind of a subgraph a {\it superbubble}, as it can be 
considered as an extension of an ordinary simple bubble 
(more detailed definition of superbubbles will be 
given in section~\ref{sec:preliminaries}). 
Superbubbles are complicated, but it is apparent that 
many of them are formed as a result of errors, inexact repeats, 
diploid/polyploid genomes, or frequent mutations. 
Thus detection of superbubbles should be very important, and 
it should be useful if we can detect them efficiently. 
For example, further 
time-consuming complicated algorithms (e.g., optimal alignment, 
paired-end read analyses, etc) 
are applicable against the superbubbles, 
even if they are too complicated to use against the entire graph. 

In the followings, we will give detailed 
definition of the superbubbles in section~\ref{sec:preliminaries}, 
and show an efficient algorithm for
finding superbubbles in section~\ref{sec:algorithm}. 
We will show that the algorithm 
runs in average-case linear time against graphs with a reasonable 
model, though the worst-case time complexity is quadratic. 
In section~\ref{sec:experiments}, we will show that the superbubbles can be 
efficiently enumerated in reasonable time with a small machine, 
through large-scale experiments against reads from a whole human genome.

\section{Preliminaries}
\label{sec:preliminaries}

\subsection{Superbubble}
\label{sec:superbubble}

Here, we formally define superbubbles and show some properties of them which are necessary in the rest of the paper.

\begin{definition}\label{def:superbubble}
Let $G=(V,E)$ be a directed graph.
If an ordered pair of distinct vertices $(s,t)$ satisfies the following:
\begin{description}
\item[reachability]\label{cond:reachability} $t$ is reachable from $s$;
\item[matching]\label{cond:matching} the set of vertices reachable from $s$ without passing\footnote{Passing through a vertex means that visiting and then leaving it, not just visiting or leaving alone.} through $t$ is equal to the set of vertices from which $t$ is reachable without passing through $s$;
\item[acyclicity]\label{cond:acyclicity} the subgraph induced by $U$ is acyclic where $U$ is the set of vertices in the above condition;
\item[minimality]\label{cond:maximality} no vertex in $U$ other than $t$ forms a pair with $s$ that satisfies the conditions above,
\end{description}
then we say that the subgraph in the description of the acyclicity condition is a \textbf{superbubble} and $s$, $t$ and $U\setminus\{s,t\}$ are this superbubble's \textbf{entrance}, \textbf{exit} and \textbf{interior} respectively.
For any pair of vertices $(s,t)$ that satisfies the above conditions, we denote the superbubble as $\langle s,t\rangle$.
\end{definition}

To take full advantage of the notation $\langle s,t\rangle$, we first need to confirm that if $(s_1,t_1)\neq(s_2,t_2)$ then $\langle s_1,t_1\rangle\neq\langle s_2,t_2\rangle$.
The following remark ensures it.
\begin{remark}
There is a one-to-one correspondence between the vertex pairs satisfying the conditions in Definition \ref{def:superbubble} and superbubbles.
\end{remark}
\begin{proof}
Because of the acyclicity condition, the vertices of a superbubble can be topologically sorted, {\it i.e.}, they can be ordered in such a way that if $v$ is reachable from $u$ then $u<v$. Due to the matching condition, $s$ (resp. $t$) is the minimum (resp. maximum) ordered vertex.
\end{proof}

Now we observe a proposition which clarifies the situation and motivates linear time enumeration of superbubbles.

\begin{proposition}
Any vertex can be the entrance (resp. exit) of at most one superbubble.
\end{proposition}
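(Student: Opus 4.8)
The plan is to prove the entrance statement directly and to obtain the exit statement by reversing all edges. For the entrance case, suppose some vertex $s$ is the entrance of two superbubbles $\langle s,t_1\rangle$ and $\langle s,t_2\rangle$ with $t_1\neq t_2$, and let $U_1,U_2$ be their vertex sets. The first move is to convert the minimality condition into a reachability statement. Since $\langle s,t_2\rangle$ satisfies the reachability, matching and acyclicity conditions and $t_2\neq t_1$, the minimality of $\langle s,t_1\rangle$ forces $t_2\notin U_1$; and by matching, $U_1$ is exactly the set of vertices reachable from $s$ without passing through $t_1$, so $t_2\notin U_1$ means that every path from $s$ to $t_2$ passes through $t_1$. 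Symmetrically, every path from $s$ to $t_1$ passes through $t_2$.

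The contradiction then comes from simple path surgery. Take a simple path $P$ from $s$ to $t_1$ (one exists by reachability). By the above it must contain $t_2$ as an internal vertex, so $P=s\to\cdots\to t_2\to\cdots\to t_1$. The prefix of $P$ running from $s$ to $t_2$ is itself a path from $s$ to $t_2$, and because $P$ is simple and $t_1$ occurs only at its end, this prefix avoids $t_1$ entirely. Hence there is a path from $s$ to $t_2$ not passing through $t_1$, contradicting the previous paragraph. Therefore $t_1=t_2$.

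For the exit case I would pass to the reverse graph $G^R$. The reversed pair $(t,s)$ satisfies reachability, matching and acyclicity in $G^R$ precisely when $(s,t)$ satisfies them in $G$ (reversal merely exchanges the two reachability sets that the matching condition already equates). The whole asymmetry is therefore concentrated in the minimality condition, and the single thing that needs checking — which I expect to be the main obstacle — is that the entrance-sided minimality condition is equivalent to its exit-sided dual, i.e. that a superbubble $\langle s,t\rangle$ admits no interior vertex $w$ for which $(w,t)$ already satisfies reachability, matching and acyclicity.

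I expect to settle this with the same two ingredients used above: reachability translation and the acyclicity of $U$. If such a $w$ existed, then matching for $(w,t)$ together with acyclicity of $U$ (a cycle through $s$ and $w$ inside $U$ is impossible) would force $w$ to lie on every path from $s$ to $t$; one then checks that $(s,w)$ inherits reachability, matching and acyclicity, each inclusion of the matching equality being obtained by exhibiting the required subpath and ruling out the alternative via a would-be cycle in $U$. This makes $w$ a vertex of $U\setminus\{t\}$ forming a valid pair with $s$, contradicting the minimality of $\langle s,t\rangle$. Given this duality, being a superbubble is preserved under reversal, and the exit statement is exactly the entrance statement applied to $G^R$.
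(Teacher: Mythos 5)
Your entrance-case argument is correct and shares its opening move with the paper's proof: minimality of $\langle s,t_1\rangle$ forces $t_2\notin U_1$ (the paper phrases this as: if $t_2$ were a vertex of $\langle s,t_1\rangle$, it would be an interior vertex, contradicting minimality). The two proofs then part ways. The paper takes a path from $s$ to $t_2$, argues that $U_1$ is closed under following edges except at $t_1$, so the last vertex of that path lying inside $U_1$ must be $t_1$; hence $t_1$ lies in $U_2$, and the same minimality contradiction is re-applied with the roles of $t_1$ and $t_2$ swapped. You instead use the exclusion symmetrically ($t_2\notin U_1$ \emph{and} $t_1\notin U_2$), translate both through the matching condition into ``every $s$-to-$t_2$ path passes through $t_1$'' and vice versa, and finish with the simple-path prefix argument. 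Both are valid; yours avoids the frontier/closure step and is arguably cleaner.

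The more substantial difference is the exit half, where you are more careful than the paper. The paper's proof never considers two superbubbles sharing an exit; it tacitly treats that case as symmetric, but as you rightly observe the symmetry is not free, because the minimality condition of Definition~\ref{def:superbubble} is stated on the entrance side only. Your duality lemma is exactly the missing ingredient, and your sketch of it is sound: if an interior vertex $w$ had $(w,t)$ satisfying reachability, matching and acyclicity, then $s$ cannot lie in the matching set of $(w,t)$ (otherwise $U$ would contain a cycle through $s$ and $w$, violating acyclicity), so every $s$-to-$t$ path passes through $w$; from this, both inclusions of the matching equality for $(s,w)$ follow by concatenating and truncating paths, with acyclicity of $U$ ruling out the bad cases; and then $(s,w)$ contradicts entrance-minimality. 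The one caveat is that this portion of your submission is a plan (``I expect to settle this\ldots'') rather than a written-out proof; the details do need to be supplied, but they go through, and once expanded your argument establishes the exit statement --- which the paper, strictly speaking, never proves at all.
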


Note that this proposition does not exclude the possibility that a vertex is the entrance of a superbubble and the exit of another superbubble.

\begin{proof}
We prove the proposition by \textit{reductio ad absurdum}.
Suppose $\langle s,t_1\rangle$ and $\langle s,t_2\rangle$ are distinct superbubbles.
If $t_2$ is a vertex in $\langle s,t_1\rangle$, then $t_2$ is in the interior of $\langle s,t_1\rangle$ but this contradicts to the minimality condition for $\langle s,t_1\rangle$.
Similarly, $t_1$ being a vertex in $\langle s,t_2\rangle$ also results in a contradiction.

Suppose, on the other hand, that $t_2$ is not a vertex in $\langle s,t_1\rangle$. There is a path from $s$ to $t_2$.
By removing cycles from $t_2$ to $t_2$ if necessary, this path can be taken in such a way that $t_2$ appears only at the last step and at this time, all vertices in the path are in $\langle s,t_2\rangle$.
On the other hand, the vertex just before the first vertex on the path that is not in $\langle s,t_1\rangle$ is $t_1$.
In particular this means that $t_1$ is in $\langle s,t_2\rangle$ but this leads to contradiction by the first half of the argument.
\end{proof}

\begin{corollary}
There are $O(n)$ superbubbles in a graph with $n$ vertices.
\end{corollary}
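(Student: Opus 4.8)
The plan is to reduce the counting problem to an injection into the vertex set and then read off the bound. Concretely, I would define the map that sends each superbubble to its entrance, show this map is injective using the Proposition just proved, and conclude that the number of superbubbles is at most $n$.

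First I would check that the entrance map is well defined. By the Remark, a superbubble is in one-to-one correspondence with its vertex pair $(s,t)$, so every superbubble has a uniquely determined entrance $s$; assigning to each superbubble this vertex $s\in V$ therefore gives a genuine function from the set of superbubbles into $V$.

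Next comes the only real content of the argument, namely injectivity. Suppose two superbubbles had the same entrance $s$, say $\langle s,t_1\rangle$ and $\langle s,t_2\rangle$. The Proposition states that any vertex is the entrance of at most one superbubble, so these two superbubbles must coincide. Hence distinct superbubbles have distinct entrances, and the entrance map is injective.

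Finally, an injection from the set of superbubbles into $V$ forces that set to have cardinality at most $\lvert V\rvert=n$, which is $O(n)$, as claimed. I expect no genuine obstacle here: all the difficulty has already been absorbed into the Proposition, and the Corollary is essentially a one-line pigeonhole consequence of it. The only points requiring care are citing the Remark to make the entrance well defined and citing the Proposition in the correct direction (one entrance per superbubble, so at most one superbubble per entrance).
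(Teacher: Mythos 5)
Your argument is correct and is exactly the reasoning the paper intends: the Corollary is stated immediately after the Proposition with no separate proof, since mapping each superbubble injectively to its entrance (unique by the Remark, injective by the Proposition) gives the bound at once. Your write-up just makes this implicit pigeonhole step explicit.
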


Before closing this subsection, let us point out, without proof, yet another property of superbubbles that is not directly necessary for this work but worth mentioning to grasp the picture.

\begin{claim}
If two distinct superbubbles share a vertex, either one's exit is the other's entrance or one is included in the other's interior.
\end{claim}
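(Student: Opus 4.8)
The plan is to first prove a \emph{portal lemma} that pins down how a superbubble attaches to the rest of the graph, and then to run a short case analysis on the relative positions of the two entrance/exit pairs, closing the one genuinely troublesome case with the minimality condition.

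First I would establish the portal lemma: for a superbubble $\langle s,t\rangle$ with vertex set $U$ and interior $I=U\setminus\{s,t\}$, every edge entering a vertex of $I\cup\{t\}$ originates in $U$, and every edge leaving a vertex of $\{s\}\cup I$ lands in $U$. Both halves follow directly from the matching condition: if $v\in I$ then $v$ reaches $t$ without passing through $s$, so prepending any in-edge $u\to v$ with $u\neq s$ keeps $s$ non-internal and forces $u\in U$; dually for out-edges using reachability from $s$. Read contrapositively, this gives the usable form: any walk starting outside $U$ that reaches a vertex of $U$ first meets $U$ at $s$, and any walk starting in $U$ that leaves $U$ last meets $U$ at $t$. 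A companion \emph{convexity} corollary --- a path whose endpoints lie in $U$ and which never uses $t$ (resp.\ $s$) as an internal vertex stays entirely inside $U$ --- comes out of the same prefix argument. I would also record, from the earlier Proposition, that two distinct superbubbles $\langle s_1,t_1\rangle$ and $\langle s_2,t_2\rangle$ must satisfy $s_1\neq s_2$ and $t_1\neq t_2$.

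Next, using a shared vertex $v\in U_1\cap U_2$, I would derive two containment dichotomies by tracing the within-superbubble paths through $v$. Following $s_2$ to $v$ inside $U_2$ and applying the entry half of the portal lemma gives ``$s_2\in U_1$ or $s_1\in U_2$''; following $v$ to $t_2$ inside $U_2$ and applying the exit half gives ``$t_2\in U_1$ or $t_1\in U_2$''. These split the situation into a few cases. When both endpoints of one superbubble lie in the other (say $s_2,t_2\in U_1$), I would show nesting: if $s_2=t_1$ or $t_2=s_1$ we are already in the ``exit-equals-entrance'' conclusion, and otherwise $s_2,t_2\in I_1$, whence the convexity corollary together with acyclicity of $U_2$ forces $U_2\subseteq U_1$, and the topological extremality of $s_1$ (the minimum) and $t_1$ (the maximum) in $U_1$ rules out $s_1,t_1\in U_2$, giving $U_2\subseteq I_1$. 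The symmetric case $s_1,t_1\in U_2$ is identical with the roles of the two superbubbles swapped.

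The only remaining possibility is genuine interleaving, $s_2\in I_1$ and $t_1\in I_2$ (after the coincidences yielding the endpoint conclusion are stripped off), and this is the hard part: no cycle is available, so acyclicity alone cannot refute it and the minimality condition must be invoked. Note first that $t_2\notin U_1$, since otherwise both endpoints of $\langle s_2,t_2\rangle$ lie in $U_1$ and the previous case applies. The key observation I would prove is then that $t_1$ separates $s_2$ from $t_2$ inside $U_2$: because $s_2\in U_1$, $t_2\notin U_1$, and every walk leaving $U_1$ exits through $t_1$, every $s_2$-to-$t_2$ path within $U_2$ runs through $t_1$. I would then show that truncating $U_2$ at this cut vertex produces a valid smaller pair: writing $A$ for the set of vertices of $U_2$ that reach $t_1$, the portal lemma applied to both superbubbles confines the set reachable from $s_2$ without passing $t_1$ and the set reaching $t_1$ without passing $s_2$ so that both coincide with $A$, so the pair $(s_2,t_1)$ satisfies reachability, matching, and acyclicity. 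Since $t_1\in I_2$, this contradicts the minimality condition of $\langle s_2,t_2\rangle$, eliminating the interleaving case and completing the dichotomy. I expect this matching verification for the truncated pair --- getting both inclusions of the two reachable sets exactly right, including the boundary convention that a path does not ``pass through'' its own endpoints --- to be the most delicate step of the argument.
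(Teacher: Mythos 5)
The paper states this claim explicitly \emph{without proof}, so your argument has to stand on its own; unfortunately it has one genuine gap, and it sits exactly where you placed the least weight. Your portal lemma, the two dichotomies, and the interleaving/truncation argument are all sound in spirit. The flaw is in the nesting case: the assertion that for $s_2,t_2\in I_1$ ``the convexity corollary together with acyclicity of $U_2$ forces $U_2\subseteq U_1$'' is not justified, and cannot be justified by those ingredients alone. Your convexity corollary requires the path to avoid $t_1$ as an internal vertex, which is precisely what you do not know here. Concretely, if a path inside $U_2$ from $s_2$ to $t_2$ through $v\in U_2\setminus U_1$ leaves $U_1$, it exits at $t_1$ and re-enters at $s_1$, producing the ``doubly crossed'' configuration: $s_2,t_2\in I_1$, $s_1,t_1\in I_2$, a $t_1\to s_1$ path through $U_2\setminus U_1$ and (by your own cycle argument applied to any $s_1\to t_1$ path in $U_1$) a $t_2\to s_2$ path through $U_1\setminus U_2$. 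The resulting closed walk uses vertices of both $U_1\setminus U_2$ and $U_2\setminus U_1$, so it lies in neither induced subgraph, and acyclicity of $U_1$ and of $U_2$ separately refutes nothing. That this configuration is genuinely compatible with reachability, matching and acyclicity is witnessed by a directed $6$-cycle $v_1\to v_2\to\cdots\to v_6\to v_1$: the pairs $(v_1,v_5)$ and $(v_4,v_2)$ both satisfy those three conditions, $v_4,v_2$ lie strictly inside the first set and $v_1,v_5$ strictly inside the second, yet neither set contains the other and no entrance coincides with an exit. Since your nesting argument invokes only consequences of reachability, matching and acyclicity, it would apply verbatim to these two pairs and ``prove'' a false containment; hence the step is unsound as written. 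The minimality condition is indispensable in the nesting case too (in the $6$-cycle it is exactly minimality that fails, e.g.\ $(v_1,v_2)$ is a valid pair inside $(v_1,v_5)$). The gap is then compounded by a circularity: your interleaving case explicitly sends the sub-case $t_2\in U_1$ back to the nesting case, so the doubly crossed configuration is never actually handled by either branch.

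The good news is that your own truncation idea repairs it. In the doubly crossed configuration one can show that the pair $(s_1,t_2)$ satisfies reachability, matching and acyclicity: every $s_1\to t_1$ path in $U_1$ must pass through $t_2$ (else it stays in $U_2$ and closes a cycle with the $t_1\to s_1$ path), whence the set reachable from $s_1$ without passing $t_2$ and the set reaching $t_2$ without passing $s_1$ both equal the ``part of $U_1$ before $t_2$'' --- the two inclusions follow from the matching condition of $\langle s_1,t_1\rangle$ plus acyclicity of $U_1$, exactly in the style of your interleaving verification. Since $t_2\in I_1$, this contradicts the minimality condition of $\langle s_1,t_1\rangle$ and kills the configuration; with that, your nesting conclusion $U_2\subseteq I_1$ and the rest of your case analysis go through. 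Two smaller repairs while you are at it: the convexity corollary is false as stated when the path \emph{starts} at $t$ (it can then leave $U$ without using $t$ internally), so add that endpoint caveat; and state explicitly that in the interleaving case you also assume $s_1\notin U_2$ (otherwise you are in the swapped nesting case), so that the case analysis is exhaustive modulo the $1\leftrightarrow 2$ symmetry.
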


\subsection{Construction of a Unipath Graph}
\label{sec:graph}
Given a set ${\cal R}$ of reads, we first construct the de Bruijn graph~\cite{PevTan01}.
Let $T = T[1,m]$ be a read of length $m$ in ${\cal R}$.
The $k$-mers of $T$ are length-$k$ substrings of $T$, that is,
$T[i,i+k-1]$ for $i = 1,2,\ldots,m-k+1$.
Let $K$ denote the multiset of $k$-mers of all reads in ${\cal R}$,
and $K_d$ denote the set of (distinct) $k$-mers that appear at least $d$ times in $K$.
A $k$-mer in $K_d$ is called a \emph{solid} $k$-mer.

The de Bruijn graph $G=(V,E)$ of ${\cal R}$ is defined as follows.
The vertex set $V$ is the set of $(k-1)$-mers defined as
$V = \{ T[1,k-1] \mid T[1,k] \in K_d \} \cup \{ T[2,k] \mid T[1,k] \in K_d \}$.
The edge set $E$ is defined as $\{ (u,v) \mid  \exists T[1,k] \in K_d, u=T[1,k-1], v=T[2,k]  \}$.
The edge label of $(u,v)$ is $T[k]$ if $u=T[1,k-1], v=T[2,k]$.
Typical values of $k$ and $d$ are $k=28$, $d=3$.

\def\outgoing{\textit{outgoing}}
\def\incoming{\textit{incoming}}
\def\outdeg{\textit{outdeg}}
\def\indeg{\textit{indeg}}
\def\last{\textit{last}}
\def\Node{\textit{Node}}
\def\fwd{\textit{fwd}}
\def\bwd{\textit{bwd}}

\begin{figure}[bt]
\begin{center}
  \includegraphics[scale=0.29]{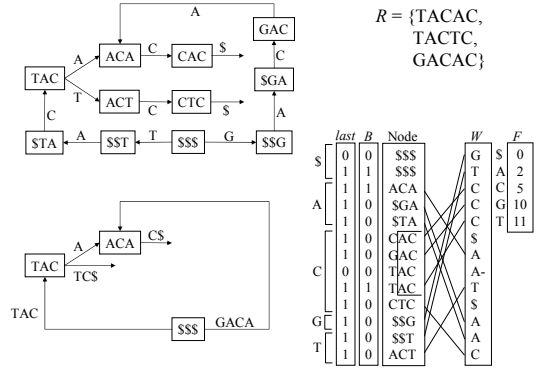}
\caption{Top right: The input set ${\cal R}$, top left: The de Bruijn graph of ${\cal R}$ with $k=3, d=1$,
bottom left: the unipath graph, bottom right: the succinct de Bruijn graph and the unipath graph. 
Non-branching nodes are removed. 
We store only {\last}, $B$, $W$ and $F$.}
%%% Added a little more legends. - Shibuya 2013.7.17
\label{fig:debruijn}
\end{center}
\end{figure}

We use the succinct de Bruijn graph~\cite{BowOno12}, which is a compressed representation
of the de Bruijn graph of ${\cal R}$.  For a set of $m$ solid $k$-mers,
the succinct de Bruijn graph uses $4m +o(m)$ bits to encode the graph,
and supports the following operations.
\begin{itemize}
%%% Combined the first two items to save space - Shibuya 2013.6.17
\item ${\outdeg}(v)$/${\indeg}(v)$ returns the number of outgoing/incoming edges from/to vertex $v$ in $O(1)$ time, respectively.
\item ${\outgoing}(v, c)$ returns the vertex $w$ pointed to by the outgoing edge of vertex $v$
with edge label $c$ in $O(1)$ time.  If no such vertex exists, it returns $-1$.
\item ${\incoming}(v, c)$ returns the vertex $w = T[1,k-1]$ such that 
there is an edge from $w$ and $v$ and $T[1] = c$ in $O(k)$ time.
If no such vertex exists, it returns $-1$.
\end{itemize}
%%%% Removed to save space. Sorry. - Shibuya 2013.6.17
%A vertex is represented by an integer in $[1,m]$.
%Note that the succinct de Bruijn graph requires that each vertex has at least one outgoing
%edge and one incoming edge.  We add dummy edges to the edge set to satisfy this condition.

From a de Bruijn graph $G=(V,E)$, we construct a unipath graph $G'=(V',E')$ as follows.
The vertex set $V'$ is a subset of $V$ such that any vertex in $V'$ has more than
one outgoing edges or more than one incoming edges.
The edge set $E'$ is the multiset of all pairs $(u,v)$ such that $u, v \in V'$ and
there is a path $u,x_1,x_2,\ldots,x_\ell,v$ in $G$ and outdegrees and indegrees
of $x_1,x_2,\ldots,x_\ell$ are all one.  The edge label of $(u,v)$ is the concatenation
of edge labels of $(u,x_1), (x_1,x_2),\ldots,(x_{\ell-1},x_\ell),(x_\ell,v)$ in $G$.
The length of the edge label is $\ell+1$.

In addition to the data structure of the succinct de Bruijn graph,
we use a bit vector $B[1,m]$ where $m = | E |$ is the number of edges in $G$
to represent the unipath graph $G'$.
We set $B[v] = 1$ if and only if the vertex $v$ of $G$ is also a vertex of $G'$.
The outdegree and the indegree of $v$ in $G'$ is equal to those of $v$ in $G$.
To find the vertex ${\outgoing}(v, c)$ in $G'$, we first compute $w = {\outgoing}(v, c)$ in $G$.
Then we repeatedly traverse the unique outgoing edge of $w$ until $B[w] = 1$.
The resulting vertex is the answer.
The unipath graph is constructed in linear time from the succinct de Bruijn graph
because each of the {\outdeg}, {\indeg}, and {\outgoing} operations takes constant time.
Figure~\ref{fig:debruijn} shows an example.
%%%%% Removed by Shibuya 2013.6.17, due to the duplication with the Fig. legend.
%For a set of reads ${\cal R} = \{ \mbox{TACAC},
%\mbox{TACTC}, \mbox{GACAC} \}$, its de Bruijn graph with $k=3$ and $d=1$ becomes the one
%in the top left corner.  Its succinct representation is shown in the bottom right corner.
%The unipath graph is shown in the bottom left corner.  
%All non-branching nodes are removed
%from the original de Bruijn graph and edge labels between non-branching nodes are concatenated.
%Note that we need not to store the node labels; we store only {\last}, $B$, $W$ and $F$.

\section{Algorithm}
\label{sec:algorithm}
Here, we explain how to enumerate all superbubbles in a given graph.
As we have seen in subsection \ref{sec:superbubble}, each vertex can be the entrance of at most one superbubble.
Therefore, once we have a way to check if a vertex $s$ has another vertex $t$ s.t. $(s,t)$ is the entrance/exit pair, then we can find all superbubbles just by iterating this procedure for all $s\in V$.
Below, we focus our attention on this reduced problem.

\paragraph{Description} 
The algorithm is based on the standard topological sorting.
It takes a directed graph $G=(V,E)$ and $s\in V$ as inputs, and returns $t\in V$ s.t. $(s,t)$ is an entrance/exit pair of a superbubble if any.
It proceeds by visiting vertices one by one maintaining the dynamic set $S$ of vertices it can visit the next time.
Initially, $S$ is set to be $\{s\}$.
It also maintains a label for each vertex.
The label \textsf{visited} means that the vertex has already been visited.
The label \textsf{seen} means that the vertex has at least one \textsf{visited} parent
% except that $s$ is \textsf{seen} unconditionally at the beginning
.
At each step, the algorithm picks out an arbitrary vertex $v$ from $S$ labeling it as \textsf{visited} and label each child as \textsf{seen}.
If all the parents of a child are \textsf{visited}, it pushes the child into $S$.
In visiting vertices, the algorithm aborts anytime when it finds a vertex with no child, which means a tip, or a parent of $s$, which means a cycle because any vertex visited is a descendent of $s$.
After visiting a vertex, the algorithm tests if it is going to visit the exit at the next step as follows.
First it checks if $S$ consists of one vertex, say $t$, and no vertex other than $t$ is labelled as \textsf{seen}.
If not, the test is negative.
%Else, $t$ is the only vertex that the algorithm has seen but not visited yet.
Otherwise, the algorithm further checks if the edge $(t,s)$ exists or not.
If it does, the algorithm aborts because it just found a path from $s$ to $s$, a cycle.
Otherwise, the algorithm returns $t$.
The algorithm aborts if $S$ runs out.
\begin{figure}[tbh]
\begin{algorithmic}[1]
\REQUIRE directed graph $G=(V,E), s\in V$
\ENSURE returns $t$ s.t. $(s,t)$ is an entrance/exit pair of a superbubble if any
\STATE push $s$ into $S$
\REPEAT
\STATE pick out an arbitrary $v\in S$
\STATE label $v$ as \textsf{visited}
\IF{$v$ does not have a child}
\STATE abort // tip
\ENDIF
\FOR{$u$ in $v$'s children}
\IF{$u=s$}
\STATE abort // cycle including $s$
\ENDIF
\STATE label $u$ as \textsf{seen}
\IF{all of $u$'s parents are \textsf{visited}}
\STATE push $u$ into $S$
\ENDIF
\ENDFOR
\IF{only one vertex $t$ is left in $S$ and no other vertex is \textsf{seen}}
\IF{edge $(t,s)$ does not exist}
\RETURN $t$
\ELSE
\STATE abort // cycle including $s$
\ENDIF
\ENDIF
\UNTIL{$|S|=0$}
%\IF{$S$ has an element and it is the only vertex \textsf{seen}}
%\STATE output the element in $S$
%\ELSE
%\STATE abort // not a unique entrance or found a cycle
%\ENDIF
\end{algorithmic}
\caption{Pseudocode of an algorithm to find the corresponding exit of an potential entrance}
\label{fig:algorithm}
\end{figure}

\paragraph{Correctness}
A vertex can be pushed into $S$ at most once because it happens when all its parents are \textsf{visited} and once \textsf{visited} a vertex never cease to being so.
Thus, the algorithm can pick out a vertex from $S$ at most $n$ times and in particular it halts.
Below, we prove the correctness of the returned value, which reduces to the followings: a) if the input vertex is the entrance of some superbubble, then the algorithm returns the corresponding exit; b) if the algorithm returns a vertex, it is the exit of a superbubble and the input vertex is the corresponding entrance.

First, we observe an invariant.
Let $V_\mathsf{seen}$ be the set of vertices labelled as \textsf{seen} and $V_\mathsf{visited}$ be the set of vertices labelled as \textsf{visited}.
Let $V_\mathrm{to}$ be the set of vertices that are reachable from $s$ without passing through any element of $V_\mathsf{seen}$ and let $V_\mathrm{from}$ be the set of vertices from which at least one element of $V_\mathsf{visited}\cup S$ can be reachable without passing through $s$.
\begin{lemma}
After the algorithm visits a vertex, i.e., after the line 12 of the pseudocode in Figure~\ref{fig:algorithm} is executed, $V_\mathrm{to} = V_\mathsf{visited}\cup V_\mathsf{seen}$ and $V_\mathrm{from} = V_\mathsf{visited}\cup S$.
In particular, if the algorithm returns $t$, then $(s,t)$ satisfies the matching condition.
\end{lemma}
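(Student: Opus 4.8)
The plan is to prove both set identities simultaneously by induction on the number of vertices that have been visited, i.e., on the number of completed iterations of the main loop, and then to read off the matching condition from the special case that triggers a return. I treat $V_\mathsf{seen}$ as the frontier of discovered-but-not-yet-visited vertices, so that $V_\mathsf{visited}$ and $V_\mathsf{seen}$ are disjoint and $V_\mathsf{visited}\cup S$ never loses a vertex across iterations. The base case is the iteration that visits $s$ itself: there $V_\mathsf{visited}=\{s\}$, $V_\mathsf{seen}$ is exactly the set of children of $s$, and $S$ consists of those children all of whose parents equal $s$; a direct check confirms both identities.

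For the inductive step I would isolate a single vertex visit. Suppose the identities hold after some iteration and let $v$ be the vertex picked next, so $v\in S$ and $v\in V_\mathsf{seen}$. Visiting $v$ moves it out of $V_\mathsf{seen}$ and $S$ and into $V_\mathsf{visited}$, labels every child of $v$ as \textsf{seen}, and pushes into $S$ exactly those children whose parents are now all \textsf{visited}. The heart of the $V_\mathrm{to}$ identity is that turning $v$ from a blocker (a \textsf{seen} vertex one may not pass through) into a \textsf{visited} vertex (which one may pass through) enlarges the reachable-avoiding-\textsf{seen} set by precisely the children of $v$: no path could previously cross $v$, and each child is now itself a blocker, so nothing strictly beyond the children is exposed. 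Hence both $V_\mathrm{to}$ and $V_\mathsf{visited}\cup V_\mathsf{seen}$ grow by exactly the children of $v$, preserving their equality.

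The $V_\mathrm{from}$ identity is where the real work lies, and I expect the main obstacle to be the backward-closure argument. The inclusion $V_\mathsf{visited}\cup S\subseteq V_\mathrm{from}$ is immediate since every vertex reaches itself. For the reverse inclusion I would use the structural invariant enforced by the algorithm: a vertex enters $S$ only after all of its parents are \textsf{visited}, and once \textsf{visited} a vertex stays \textsf{visited}; consequently every vertex of $V_\mathsf{visited}\cup S$ other than $s$ has all of its parents inside $V_\mathsf{visited}$. Given $w\in V_\mathrm{from}$, take a path from $w$ witnessing reachability of some $u\in V_\mathsf{visited}\cup S$ without passing through $s$, and look at the first vertex $x$ on it lying in $V_\mathsf{visited}\cup S$; if $x\neq w$, its predecessor $y$ is a parent of $x$ lying outside $V_\mathsf{visited}\cup S$, contradicting the parent-closure property when $x\neq s$ and contradicting the no-pass-through-$s$ condition when $x=s$. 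Therefore $x=w$ and $w\in V_\mathsf{visited}\cup S$. The delicate point to get right is the treatment of $s$: paths that would reach $s$ from a proper ancestor have to be excluded by the convention that a witnessing path may not route through $s$, and the abort-on-cycle checks guarantee no offending configuration survives to a returned value --- this is exactly the case that must be argued rather than waved away.

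Finally, for the ``in particular'' clause I would instantiate the two identities at the moment the algorithm returns. There $S=\{t\}$ and $t$ is the only \textsf{seen} vertex, so the first identity reads $V_\mathrm{to}=V_\mathsf{visited}\cup\{t\}$, which is literally the set of vertices reachable from $s$ without passing through $t$, and the second reads $V_\mathrm{from}=V_\mathsf{visited}\cup\{t\}$. To align with the statement of the condition I would note that each \textsf{visited} vertex reaches $t$ without passing through $s$ --- this holds because the run never aborted on a tip and the visited region is acyclic, so every vertex funnels forward to the unique remaining frontier vertex $t$ --- which identifies $V_\mathrm{from}$ with the set of vertices from which $t$ is reachable without passing through $s$. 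Equating the two sets yields the matching condition for $(s,t)$.
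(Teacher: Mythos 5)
Your proof is correct and shares the paper's overall skeleton---a simultaneous induction over vertex visits establishing the two identities, followed by instantiation at the moment the algorithm returns---but both of the hard sub-steps are argued by genuinely different means, and in one place more carefully than the paper itself. For the $V_\mathrm{from}$ identity, the paper tracks the delta across one visit: the only vertices $V_\mathsf{visited}\cup S$ acquires are the freshly pushed children of $v$, and these are also the only vertices $V_\mathrm{from}$ acquires because their parents were already in $V_\mathrm{from}$ by the induction hypothesis. You instead prove the inclusion $V_\mathrm{from}\subseteq V_\mathsf{visited}\cup S$ outright from the parent-closure invariant (every member of $V_\mathsf{visited}\cup S$ other than $s$ has all its parents \textsf{visited}) via a first-offending-vertex argument; this component of your proof is not really inductive at all, and it is more self-contained. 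It also forces you to confront an edge case the paper silently ignores: under the paper's literal definition of ``passing through,'' any external parent of $s$ reaches $s\in V_\mathsf{visited}$ without passing through $s$, so the stated identity $V_\mathrm{from}=V_\mathsf{visited}\cup S$ is false unless witnessing paths are forbidden to touch $s$ except as their start; the convention you impose is exactly the needed repair, and this is a point in your favor. For the final identification of $V_\mathrm{from}$ with the matching-condition set, the paper runs a second, separate induction showing that after every visit each \textsf{visited} vertex can reach some \textsf{seen} vertex without passing through $s$; you instead argue by funneling forward, using the no-tip abort, the abort on edges into $s$, and acyclicity of the visited region. That argument works, but you assert acyclicity without proof; it is established independently in the paper's correctness argument for claim b) (the first-visited-vertex-of-a-cycle contradiction), so there is no circularity, but a complete writeup should either reproduce that argument or fall back on the paper's second induction, which avoids the dependency altogether.
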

\begin{proof}
We prove the first half by mathematical induction.
After the first visit, $V_\mathsf{visited}$, $V_\mathsf{seen}$ and $S$ consist of $s$, $s$'s children and $s$'s children with indegree 1 respectively and the lemma holds.
Suppose the lemma holds up to the visit to some vertex.
During the visit to the next vertex, say $v$,
\begin{enumerate}
\item $v$ is removed from $S$ and its label is changed from \textsf{seen} to \textsf{visited};
\item all children of $v$ are labelled as \textsf{seen};
\item the children of $v$ whose parents are all \textsf{visited} are added to $S$.
\end{enumerate}
Consequently, both $V_\mathrm{to}$ and $V_\mathsf{visited}\cup V_\mathsf{seen}$ acquire the vertices reachable from $v$ without passing through any element of $V_\mathsf{seen}$, i.e., the children of $v$.
Therefore, $V_\mathrm{to} = V_\mathsf{visited}\cup V_\mathsf{seen}$ still holds.
On the other hand, $V_\mathsf{visited}\cup S$ acquires the vertices newly added to $S$, i.e., the children of $v$ whose parents are all labelled as \textsf{visited}.
%Of course, these are also added to $V_\mathrm{from}$.
Now these vertices are also in $V_\mathrm{from}$ because $V_\mathrm{from}\supseteq V_\mathsf{visited}\cup S$ by definition.
Furthermore, they are the only vertices $V_\mathrm{from}$ acquires because the parents of them were already in $V_\mathrm{from}$ after the previous visit by the induction hypothesis.
Therefore, $V_\mathrm{from} = V_\mathsf{visited}\cup S$ also stays true.

Next, we prove the last half.
If the algorithm returns $t$, after the last visit, $V_\mathrm{to} = V_\mathrm{from}$ because $S = V_\mathsf{seen}$ due to the first half.
On the other hand, at this time, $V_\mathrm{to}$ consists of the vertices reachable from $s$ without passing through $t$ because $V_\mathsf{seen} = \{t\}$.
Therefore, it suffices to show that $V_\mathrm{from}$ consists of the vertices from which $t$ is reachable without passing through $s$.
This is true because after every visit, from any vertex in $V_\mathsf{visited}$ at least one vertex in $V_\mathsf{seen}$ is reachable without passing through $s$, a fact which can be proven easily by mathematical induction again.
\end{proof}
%We do not give the proof of this lemma due to the restriction of space.
%It can be proved, for example, by mathematical induction.
%The last half follows because when the algorithm returns $t$, $V_\mathsf{seen}=\{t\}=S$.

Next, we prove a).
Let $t$ be the exit corresponding to $s$.
Because of the matching condition of $(s,t)$, the algorithm never aborts due to a tip or running out of $S$ at least up to the point when $t$ is pushed into $S$, no matter if $t$ is pushed into $S$ at all.
Similarly, the algorithm never aborts due to a cycle up to the same point because of the acyclicity condition of $(s,t)$.
On the other hand, if $t$ is indeed pushed into $S$, then $t$ must be the only vertex \textsf{seen} and all other vertices of $\langle s,t\rangle$ must be \textsf{visited} due to the matching condition of $(s,t)$ and the lemma.
Therefore, the only possibilities left are that the algorithm outputs $t$ or some other vertex in $\langle s,t\rangle$.
But the second case never happens because a vertex, say $v$, other than $t$ in $\langle s,t\rangle$ is output, then the pair $(s,v)$ satisfies the reachability, matching (due to the lemma) and acyclicity conditions, which contradicts to the minimality condition of $(s,t)$.

Last, we prove b).
Suppose the algorithm returns a vertex $t$.
Obviously, $t$ is reachable from $s$.
The matching condition holds because of the lemma.
The alleged superbubble does not contain cycles including $s$ because otherwise the algorithm must have aborted.
And it does not contain cycles not including $s$ because otherwise the first vertex visited in the cycle has a parent in the cycle.
This means the parent has been visited earlier, which contradicts the way the child was chosen.
Thus, the acyclicity condition holds.
The minimality condition holds because otherwise, there is a vertex $v$ s.t. $(s,v)$ is an entrance/exit pair and because of a) the algorithm must have returned $v$, instead of $t$.

\paragraph{Analysis}
In the worst case, each execution of the algorithm takes \mbox{$\Theta(n+m)$-time} and in total the calculation of all superbubbles takes \mbox{$\Theta(n(n+m))$-time}.
Below, we show that, under a reasonable model, the algorithm takes constant time on average and thus all superbubbles can be found in $\Theta(n)$-time in total.

As we will see in the next section, although there are tens of thousands of superbubbles in practical unipath graphs, the entire graph is so large that its size is orders of magnitude greater than the total size of superbubbles.
Thus, most of the time spent in the iterated executions of the algorithm is dedicated for traversing regions that are far away from any superbubbles.
Therefore, it is reasonable to reduce the analysis of the algorithm to the evaluation of the time spent until the traversal of a non-superbubble region is aborted.
In such a case, if a vertex is not pushed into $S$ when it is labelled as \textsf{seen}, then it is very unlikely to be visited afterwards.
In other words, once the algorithm comes across a vertex of indegree greater than 1, then it almost never proceeds to traverse its descendants.
%With these observations in mind, we consider the following process of graph traversal.
%It starts from one vertex and probabilistically traverse its descendants by recursing on each child with probability $p$.
%The event a vertex is not recursed from its parent corresponds to the event that a vertex \textsf{seen} in the execution of the algorithm has indegree greater than 1.
%Also each vertex has $i$ children with probability $p_i$.
With these observations in mind, we model the way the tree of \textsf{visited} vertices grows in the algorithm by the following probabilistic tree generation process.
It starts from the root.
Each vertex is \textsf{good} with probability $p$.
A \textsf{good} vertex corresponds to a vertex of indegree 1.
If a vertex is \textsf{good}, it spawns $i$ children with probability $p_i$.
The theory of Galton-Watson branching processes~\cite{LyoPer12} tells that the expected number of vertices of depth $i$ is $\Theta(r^i)$ where $r:=p\sum_i{ip_i}$, i.e., the expected number of children of each vertex.
Therefore, if $r<1$ the expected size of the tree is $\Theta(\frac{1}{1-r})$, a constant.
For the unipath graph we constructed from human genome data, $r$ was about 0.77 where $p$ and $p_i$ were determined as the proportion of vertices with particular in/out-degree within all vertices.%\footnote{In this calculation, we excluded some vertices introduced by dummy information in our implementation and having biased degree distribution.}

\section{Experiment}
\label{sec:experiments}
\paragraph{Procedures}
We first constructed the succinct de Bruijn graph with parameter $k=27$ and $d=3$ for the read set SRX016231, which was derived by sequencing a human individual by an Illumina sequencer.
The length of each read is 100bp and the coverage is about 40.
Next, we constructed the unipath graph as described in subsection \ref{sec:graph}.
The resulting unipath graph consists of 107,154,751 vertices and 210,207,840 edges.
Last, we found all superbubbles in the unipath graph by the algorithm in section \ref{sec:algorithm}.

\paragraph{Results}
Table \ref{fig:histogram} is the histogram of the size of superbubbles where the size of a superbubble means the number of vertices in it.
The superbubbles of size 2 are omitted because they are ordinary bubbles.
The superbubble of Fig.~\ref{fig:superbubble} is of size 20 and this histogram tells, among other things, that there are hundreds of equally or more complex superbubbles.
On the other hand, what matters the most for the application to genome assembly problem is whether superbubbles really capture erroneous or repeat/mutation abundant regions, which topological complexity alone does not necessarily suggest.
One way to assess the relevance of a superbubble in this regard is to compare the length of paths in it where length of an edge is the length of the sequence represented by the edge.
Note that topologically close paths can have a variety of lengths because each edge can be originated from a unipath.
But among 23,078 superbubbles of size equal to or greater than 5 we found, 19,926 (86.3\%) of them have the longest/shortest path length ratio smaller than 1.05.
Therefore, superbubbles like that of Fig. \ref{fig:superbubble} are indeed typical.

\begin{table}[tb]
\begin{center}
\caption{Histogram of the size of superbubbles}
\label{fig:histogram}
\begin{tabular}{|c||c|c|c|c|c|c|c|}
\hline
size & 3-9 & 10-19 & 20-29 & 30-39 & 40-49 & 50-59 & 60- \\
\hline
\#S.B. & 71663 & 4295 & 347 & 69 & 21 & 8 & 3 \\
\hline
\end{tabular}
\end{center}
\end{table}

In terms of the computation time, it took 742.1 seconds for a Xeon 3.0GHz CPU to enumerate all superbubbles including ordinary bubbles.
The number of vertices visited was 126,537,254.

\section{Concluding Remarks}
\label{sec:conclusion}
We introduced the concept of superbubbles in assembly graphs, 
and proposed an efficient algorithm for detecting them. 
But many tasks remain as future work. 
It is an open problem whether it is possible to 
detect superbubbles in worst-case linear time.
Developing methods for categorizing the detected superbubbles
(e.g., errors, repeats, mutations, and polyploids), 
and methods for fixing errors in superbubbles are important future tasks. 
It is also interesting to extend our algorithm for other 
bubble-like structures (e.g. the bulge structure \cite{NurBan13}).

\subsection*{Acknowledgments}
KS and TS are supported in part by KAKENHI 23240002.
This research was supported by JST, ERATO, Kawarabayashi Large Graph Project.
The super-computing resource was provided in part 
by Human Genome Center, the Institute of Medical Science, the University of Tokyo.

%%%%%%%%%%%%%%%%%%%%%%%%%%%%%%%%%%%%%%%%%%%%%%%%%%%%%%%%%%%%%%%%%%%%%%%%%%

\bibliographystyle{plain}
\bibliography{superbubble}

%%%%%%%%%%%%%%%%%%%%%%%%%%%%%%%%%%%%%%%%%%%%%%%%%%%%%%%%%%%%%%%%%%%%%%%%%%

%%%% Appendix if any
%\section{Appendix}
%\label{sec:appendix}

\end{document}